\newtheorem{thm}{Theorem}[section]
\newtheorem{lem}[]{Lemma}
\numberwithin{equation}{section}
\newcommand{\RRR}{\mathbb{R}^3}
\newcommand{\bee}{\begin{equation*}}
\newcommand{\eee}{\end{equation*}}
\newcommand{\be}{\begin{equation}}
\newcommand{\ee}{\end{equation}}
\newcommand{\bal}{\begin{align}}
\newcommand{\eal}{\end{align}}
\newcommand{\ba}{\begin{array}}
\newcommand{\ea}{\end{array}}
\begin{document}
\title{Electromagnetic wave scattering by small perfectly conducting particles and applications}

\author{
A. G. Ramm$\dag$\footnotemark[1]
\\
\\
$\dag$Mathematics Department, Kansas State University,\\
Manhattan, KS 66506-2602, USA
}

\renewcommand{\thefootnote}{\fnsymbol{footnote}}
\footnotetext[1]{Email: ramm@math.ksu.edu}
\date{}
\maketitle

\begin{abstract} \noindent
A rigorous theory of electromagnetic (EM) wave scattering by one and many perfectly conducting small bodies of an arbitrary shape is developed. Equation for the effective field is derived in a medium in which many small particles are distributed.
A method is given to change the refraction coefficient of a given medium in a desired direction by embedding into this medium many small particles.

{\bf Keywords:}
EM wave scattering; small conducting particles; changing refraction coefficient.

{\bf MSC:}
35Q61, 78A25, 78A45

{\bf PACS:}
0230Rz, 4225Fx, 8105Zx
\end{abstract}

\section{Introduction} \label{sec1}
The electromagnetic (EM) wave scattering by a small perfectly conducting particle was studied in many papers. Rayleigh (1871) initiated this study and understood that the main term in the scattered field is given by a dipole radiation, \cite{Ra}.
The smallness of the particle $D$ is characterized by the inequality $ka \ll 1$, where $a=\frac{1}{2}\text{diam}D$ is the characteristic size of the particle, $k=\frac{\omega}{c}$ is the wave number, $\omega$ is the frequency, $c$ is the wave speed. The dipole radiation is generated by the induced dipole polarization of the small body. In \cite{Ra} there was no method given for calculating the induced dipole moment for a body of an arbitrary shape. This was done in \cite{R144, R476}, where formulas were derived that allowed one to calculate the polarizability tensor for a body of an arbitrary shape and, therefore, the induced dipole moment. In paper \cite{Mie} the EM wave scattering problem was solved for spheres by the method of separation of variables. This method cannot be used for bodies of arbitrary shapes. The first basic new result of the current paper is an analytic explicit formula for the EM field scattered by a small perfectly conducting particle of an arbitrary shape
 and the method used for the derivation of this formula.

The second basic new result is a numerical method for solving EM wave scattering by $M$ small perfectly conducting bodies. The number $M$ can be very large, $M=O(10^{12})$, and $M=M(a)\to \infty$ as $a \to 0$.

 More precisely, the distribution of the small bodies is described as follows. The number $\mathcal{N}(\Delta)$ of small bodies in an arbitrary open set $\Delta \subset \Omega$, where $\Omega$ is an arbitrary fixed domain in $\RRR$, is given by the formula
\be \label{eq1.1}
    \mathcal{N}(\Delta)=\frac{1}{a^3} \int_{\Delta} N(x)dx(1+o(1)), \quad a \to 0.
\ee
Here $N(x)\ge 0$ is a continuous function that can be chosen arbitrarily by the experimentalist. If $D_m, 1 \le m \le M$, are the small non-intersecting bodies distributed in $\Omega$ according to the law \eqref{eq1.1}, then one has
\be \label{eq1.2}
    M=M(a)=\frac{1}{a^3} \int_\Omega N(x) dx[1+o(1)]=O\left(\frac{1}{a^3}\right), \quad a \to 0.
\ee
By $x_m$ we denote an arbitrary fixed point inside $D_m$. Thus,
\be \label{eq1.3}
    \mathcal{N}(\Delta)=\sum_{x_m \in \Delta}1.
\ee
The EM wave scattering problem for one small perfectly conducting body is formulated as follows:
\begin{align}
    &\nabla \times E=i\omega \mu H \quad\text{in }D':=\RRR \setminus D, \label{eq1.4}\\
    &\nabla \times H=-i\omega\epsilon E \quad\text{in } D',\label{eq1.5}
\end{align}
where $\omega, \mu$ are constants in $D'$, $k=\omega\sqrt{\epsilon\mu}$, $\epsilon$ and $\mu$ are dielectric permitivity and magnetic permeability satisfying Maxwell's equations \eqref{eq1.4}-\eqref{eq1.5} in $\RRR$, and $v_E, v_H$ be the scattered fields, satisfying equations \eqref{eq1.4}-\eqref{eq1.5} and the radiation condition
\be \label{eq1.6}
    r\left(\frac{\partial v_E}{\partial r}-ikv_E\right)=o(1), \quad r:=|x|\to \infty.
\ee
The boundary condition on the surface $S$ of the small body $D$ is
\be \label{eq1.7}
    [N,[E,N]]=0 \quad \text{on }S,
\ee
which means that tangential component of $E$ vanishes on $S$. Here $N$ is the unit normal to $S$ directed into $D'$, $[A,B]$ stands for the vector product, $A\cdot B$ stands for the scalar product.

Suppose that
\be \label{eq1.8}
    E_0=\mathcal{E}e^{ik\alpha\cdot x}, \quad \mathcal{E}\cdot \alpha=0, \quad H_0:=\frac{\nabla \times E_0}{i\omega \mu},
\ee
is the plane incident wave, $\mathcal{E}=const, \alpha \in S^2$, $S^2$ is the unit sphere in $\RRR$. Then
\be \label{eq1.9}
    \nabla \cdot E_0=0, \quad \nabla \cdot H_0=0,
\ee
and $E_0, H_0$ satisfy equations \eqref{eq1.4}-\eqref{eq1.5} in $\RRR$.

The scattered field
\be \label{eq1.10}
    v_E=\frac{e^{ikr}}{r}A(\beta,\alpha,k)+o\left(\frac{1}{r}\right), \quad r=|x|\to \infty, \quad \beta:=\frac{x}{r}.
\ee
The coefficient $A(\beta,\alpha,k)$ is called the scattering amplitude. If $v_E$ is known, then
\be \label{eq1.11}
    v_H=\frac{\nabla\times v_E}{i\omega\mu}.
\ee
In section \ref{sec2} a formula for $A(\beta,\alpha,k)$ is derived.

In section \ref{sec3} a numerical method is developed for solving many-body EM wave scattering problem in the case of small perfectly conducting bodies of an arbitrary shape and a limiting equation is derived for the effective field in the medium as $a\to 0$.

In section \ref{sec4} applications of our theory to materials science are discussed. It is explained how to change the original refraction coefficient in the desired direction.

In section \ref{sec5} it is stated that basic results of this paper are formulated in theorems \ref{thm2.3}, \ref{thm2.4}, \ref{thm3.2}, \ref{thm4.1}.

The ideas we use in this paper are similar to the ideas developed in \cite{R635}. In \cite{BH} one finds a recent report about light scattering by small particles.

\section{ EM wave scattering by one small perfectly conducting body} \label{sec2}
Let us look for a solution to problem \eqref{eq1.4}-\eqref{eq1.8} of the form
\be \label{eq2.1}
    E=E_0+\nabla\times\int_S g(x,t)J(t)dt, \quad g(x,t):=\frac{e^ik|x-t|}{4\pi |x-t|}.
\ee
Here $J(t)$ is a tangential to $S$ field that should be found from the boundary condition \eqref{eq1.7}. Thus, we look for \be \label{eq2.2}
    v_E:=\nabla\times\int_S g(x,t)J(t)dt.
\ee
Equation \eqref{eq1.4} is satisfied if one takes
\be \label{eq2.3}
    v_H:= \frac{\nabla\times v_E}{i \omega\mu}.
\ee
Equation \eqref{eq1.5} is satisfied also:
\be \label{eq2.4}
    \nabla\times v_H = \frac{\nabla\times\nabla\times\nabla\times\int_S g J dt}{i \omega\mu} =\frac{-\nabla^2\nabla\times\int_S g J dt}{i \omega\mu} =\frac{k^2 v_E}{i \omega\mu} =-i\omega\epsilon v_E.
\ee
This is true for any $J$. Let us prove that {\em $J$ is uniquely determined by the boundary condition \eqref{eq1.7}, namely, that equation \eqref{eq1.7} has at most one solution.}

 To prove this, it is sufficient to prove that equation \eqref{eq1.7} is of Fredholm-type for $J$, and that the corresponding homogeneous equation has only the trivial solution.

Let us write equation \eqref{eq1.7} for $J$:
\be \label{eq2.5}
    [N,[E_0,N]]+[N,[\nabla\times\int_S g(x,t)J(t) dt|_{x\to s^-}, N]]=0,
\ee
where $x\to s^-$ denotes the limit as $x\to s$ from outside $D$ along the normal to $S$ at the point $s$. We will use the known formula (see, for example, \cite{R635}, p.86):
\be \label{eq2.6}
    \lim_{x\to s^-}[N,\nabla\times\int_S g(x,t)J(t)dt] =\frac{J(s)}{2}+ \int_S [N_s,[\nabla_s g(s,t), J(t)]]dt.
\ee
Take a vector product of $N$ with the left side of equation \eqref{eq2.5}, use \eqref{eq2.6} and get
\bee
    [N,E_0]+[N,\nabla\times\int_S g(x,t)J(t)dt]_{x\to s^-}=0,
\eee
or
\be \label{eq2.7}
    \frac{J(s)}{2}+TJ:=\frac{J(s)}{2}+\int_S [N_s, [\nabla_s g(s,t), J(t)]]dt=-[N,E_0].
\ee
Equation \eqref{eq2.7} is equivalent to equation \eqref{eq2.5}: taking the vector product of \eqref{eq2.7} with $N$ one gets \eqref{eq2.5}.

\begin{lem} \label{lem2.1}
    Equation \eqref{eq2.7} is of Fredholm-type in the space $C(S)$ of continuous tangent to $S$ fields.
\end{lem}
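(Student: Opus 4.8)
The plan is to show that the integral operator $T$ defined in \eqref{eq2.7} is compact on the space $C(S)$ of continuous tangent fields, so that $\tfrac{1}{2}I+T$ is the sum of (one half of) the identity and a compact operator. By the Riesz--Fredholm theory such an operator is of Fredholm-type, which is exactly the assertion of the lemma.

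First I would simplify the kernel of $T$ using the vector identity $[A,[B,C]]=B(A\cdot C)-C(A\cdot B)$. Applying it with $A=N_s$, $B=\nabla_s g(s,t)$, $C=J(t)$ gives
\[
[N_s,[\nabla_s g,J]]=\nabla_s g\,(N_s\cdot J)-J\,(N_s\cdot\nabla_s g).
\]
Since $J$ is tangential to $S$, one has $N_s\cdot J=0$, so the first term drops out. Hence the kernel reduces to $-(N_s\cdot\nabla_s g(s,t))J(t)$, and $T$ acts as multiplication of $J$ by the scalar kernel $-N_s\cdot\nabla_s g(s,t)$.

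The crucial step is to estimate the singularity of $N_s\cdot\nabla_s g(s,t)$ as $t\to s$. Writing $g(s,t)=\frac{e^{ik|s-t|}}{4\pi|s-t|}$, its gradient splits into a leading part coming from $\nabla_s\frac{1}{|s-t|}=-\frac{s-t}{|s-t|^3}$, of apparent order $|s-t|^{-2}$, and a part coming from differentiating $e^{ik|s-t|}$, which is only of order $|s-t|^{-1}$. For the leading part, $N_s\cdot\nabla_s\frac{1}{4\pi|s-t|}=-\frac{N_s\cdot(s-t)}{4\pi|s-t|^3}$. Here the smoothness of $S$ enters: for points $s,t$ on a $C^2$ surface, $N_s\cdot(s-t)$ is the signed distance of $t$ from the tangent plane at $s$, so $N_s\cdot(s-t)=O(|s-t|^2)$. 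Consequently $N_s\cdot\nabla_s\frac{1}{4\pi|s-t|}=O(|s-t|^{-1})$, and the full kernel $N_s\cdot\nabla_s g$ is weakly singular of order $|s-t|^{-1}$.

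Since $S$ is a smooth compact two-dimensional surface, an integral operator whose kernel is continuous off the diagonal and bounded by $C|s-t|^{-1}$ --- a singularity weaker than the surface dimension $2$ --- is compact on $C(S)$, by approximating it uniformly by operators with truncated continuous kernels. This gives compactness of $T$, so \eqref{eq2.7} is of Fredholm-type. I expect the main obstacle to be the two intertwined facts that drive the estimate: the double cross-product structure together with the tangency of $J$ annihilates the naively $|s-t|^{-2}$ term, and the geometric bound $N_s\cdot(s-t)=O(|s-t|^2)$ softens the remaining singularity. Once this cancellation and the curvature estimate are in place, compactness follows from the classical potential-theoretic argument.
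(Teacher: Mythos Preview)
Your overall strategy---expand the double cross product via the BAC--CAB identity, then show the resulting kernel is weakly singular of order $|s-t|^{-1}$---matches the paper's. But there is a genuine error in how you dispose of the first term.

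You write ``Since $J$ is tangential to $S$, one has $N_s\cdot J=0$, so the first term drops out.'' Tangentiality of $J$ means $N_t\cdot J(t)=0$ at the \emph{same} point; it does \emph{not} give $N_s\cdot J(t)=0$ for $s\neq t$. In general $N_s\cdot J(t)\neq 0$, so the term $\nabla_s g(s,t)\,(N_s\cdot J(t))$ does not vanish. What you actually need is the estimate
\[
|N_s\cdot J(t)|=|(N_s-N_t)\cdot J(t)|\le c\,|s-t|\,\|J\|,
\]
which uses the $C^1$-smoothness of the normal field together with $N_t\cdot J(t)=0$. Combining this with $|\nabla_s g(s,t)|=O(|s-t|^{-2})$ shows the first term also has a weakly singular kernel of order $|s-t|^{-1}$, and compactness follows as you argue for the second term. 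This is precisely how the paper handles it: it keeps both terms in the expansion (formula \eqref{eq2.9}) and invokes the tangentiality relation \eqref{eq2.8} together with the normal-derivative estimate \eqref{eq2.10}; the bound $|N_s\cdot J(t)|=O(|s-t|)$ appears explicitly as \eqref{eq2.12}.

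So the fix is small but essential: replace ``drops out'' by the $O(|s-t|)$ estimate above, and your argument goes through and agrees with the paper's.
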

\begin{proof}
    It is sufficient to check that the operator $T$ in \eqref{eq2.7} is compact in $C(S)$ and that any solution $J$ to equation \eqref{eq2.7} is tangential to $S$, that is,
    \be \label{eq2.8}
        N_s \cdot J(s)=0, \quad \forall s \in S.
    \ee
    To prove \eqref{eq2.8}, scalar multiply \eqref{eq2.7} by $N_s$. Since $N_s \cdot [N_s,E_0]=0$ and $N_s\cdot[N_s,[\nabla_s g, J(t)]]=0$, the desired relation \eqref{eq2.8} follows.

    Compactness of $T$ follows from the formula
    \be \label{eq2.9}
        TJ=\int_S \left(\nabla_s g(s,t)N_s\cdot J(t)-J(t)\frac{\partial g(s,t)}{\partial N_s}\right)dt,
    \ee
     relation \eqref{eq2.8}, and the estimate
    \be \label{eq2.10}
        \left|\frac{\partial g(s,t)}{\partial N_s}\right|=O\left(\frac{1}{|s-t|}\right), \quad |s-t|\to 0,
    \ee
    known from the potential theory for the $C^2$ surfaces (see, for example, \cite{R635}, chapter 11).

    Lemma \ref{lem2.1} is proved.
\end{proof}
\begin{lem} \label{lem2.2}
    The homogeneous version of equation \eqref{eq2.7} implies $J=0$, provided that $a$ is sufficiently small.
\end{lem}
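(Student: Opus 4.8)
The plan is to show that a nontrivial null solution $J$ of the homogeneous version $\frac{J}{2}+TJ=0$ of \eqref{eq2.7} would produce a scattered field that must vanish on both sides of $S$, whence $J$ itself vanishes through the jump relation behind \eqref{eq2.6}; the smallness of $a$ will be used only at the very end, to exclude interior resonances. First I would set $v_E:=\nabla\times\int_S g(\cdot,t)J(t)\,dt$ and $v_H:=(i\omega\mu)^{-1}\nabla\times v_E$, now regarded as fields on all of $\RRR\setminus S$. By \eqref{eq2.6}, $\frac{J}{2}+TJ=0$ says exactly that the exterior tangential trace $[N,v_E]$ vanishes on $S$. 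Since $v_E$ satisfies \eqref{eq1.4}--\eqref{eq1.5} in $D'$ and the radiation condition \eqref{eq1.6}, the uniqueness theorem for the exterior perfectly conducting problem forces $v_E=0$, and hence $v_H=0$, throughout $D'$.

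Next I would transfer this vanishing across $S$. Writing $w:=\int_S g(\cdot,t)J(t)\,dt$, one has $v_H=(i\omega\mu)^{-1}\bigl(k^2 w+\nabla(\nabla\cdot w)\bigr)$, where $\nabla\cdot w=\int_S g\,\mathrm{div}_S J\,dt$ is a scalar single layer. Both $w$ and $\nabla\cdot w$ are continuous across $S$, and $[N,\nabla(\nabla\cdot w)]=N\times\nabla_S(\nabla\cdot w)$ involves only surface data, so the tangential trace $[N,v_H]$ is continuous across $S$. As $v_H=0$ in $D'$, the interior tangential trace $[N,v_H]$ vanishes on $S$ as well, i.e. $v_H$ has no tangential component on $S$ from inside $D$. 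Similarly $N\cdot v_E$, being a surface-curl type expression in the continuous field $w$, is continuous, so $N\cdot v_E=0$ on $S$ from inside.

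In the bounded domain $D$ the field $v_E$ then solves $\nabla\times\nabla\times v_E=k^2 v_E$ with $\nabla\cdot v_E=0$, and the vector Green identity reads
\bee
\int_D\bigl(|\nabla\times v_E|^2-k^2|v_E|^2\bigr)\,dx=-i\omega\mu\int_S [N,v_E]\cdot\overline{v_H}\,dS .
\eee
The right-hand side vanishes, since $[N,v_E]$ is tangential while $v_H$ has no tangential component on $S$, giving $\int_D|\nabla\times v_E|^2\,dx=k^2\int_D|v_E|^2\,dx$. For a body of diameter $2a$, fields with $\nabla\cdot v_E=0$ and $N\cdot v_E=0$ on $S$ obey a Poincar\'e-type inequality $\int_D|\nabla\times v_E|^2\,dx\ge c\,a^{-2}\int_D|v_E|^2\,dx$, with $c>0$ depending only on the shape of $D$; equivalently the smallest interior eigenvalue is $O(a^{-2})$. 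Hence for $a$ so small that $c\,a^{-2}>k^2$ one gets $v_E=0$ in $D$. With $v_E=0$ on both sides of $S$, the jump formula gives $J=[N,v_E]^{+}-[N,v_E]^{-}=0$.

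The main obstacle is precisely this last step: one must justify the $a^{-2}$ scaling of the spectral gap (equivalently, that no interior resonance lies below $k$ once $a$ is small) and confirm that the interior data are exactly the homogeneous conditions $N\cdot v_E=0$ and $[N,v_H]=0$ needed to kill the boundary term and to invoke the Poincar\'e estimate. An alternative that makes the dependence on $a$ transparent is to rescale $D=a\widetilde D$ and note that $T\to T_0$ (the static operator, $k=0$) in operator norm as $ka\to0$; since $\tfrac12+T_0$ is injective for simply connected $\widetilde D$ by the same energy argument with $k=0$, and is Fredholm of index zero by Lemma \ref{lem2.1}, it is invertible, so the small perturbation $\tfrac12+T$ remains invertible for $a$ sufficiently small.
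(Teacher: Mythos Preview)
Your argument is essentially correct but takes a very different route from the paper. The paper simply shows that $\|T\|=O(a)$ in $C(S)$: using the tangentiality relation \eqref{eq2.8} one has $|N_s\cdot J(t)|=|(N_s-N_t)\cdot J(t)|\le c|s-t|\,\|J\|$, so the first term of \eqref{eq2.9} contributes $\int_S O(|s-t|^{-1})\,dt\,\|J\|=O(a)\|J\|$, while the second term is bounded by $\int_S|\partial g/\partial N_s|\,dt=O(a)$ from potential theory. Hence $\tfrac12 I+T$ is invertible by a Neumann series once $a$ is small, and the homogeneous equation has only $J=0$.

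Your approach is the classical PDE/energy route: exterior uniqueness kills $v_E$ in $D'$, trace continuity transfers the right data to the interior, and a scaling argument on the first interior Maxwell eigenvalue excludes a nontrivial $v_E$ in $D$ for $ka\ll 1$; then the jump relation yields $J=0$. This is sound (with the usual caveat that the Poincar\'e-type bound $\int_D|\nabla\times v_E|^2\ge c\,a^{-2}\int_D|v_E|^2$ under $\nabla\cdot v_E=0$, $N\cdot v_E|_S=0$ requires $D$ simply connected to rule out harmonic Neumann fields), but it is considerably heavier: you invoke exterior uniqueness for Maxwell, several trace identities, and an interior spectral fact, whereas the paper needs only two lines of kernel estimates. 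What your method buys is robustness---it is the standard template for injectivity of electromagnetic layer operators and does not rely on the particular algebraic form \eqref{eq2.9} of $T$. Your closing alternative (rescale and perturb from the static operator $T_0$) is closer in spirit to the paper, but the paper is sharper still: it shows $\|T\|\to 0$, not merely $\|T-T_0\|\to 0$, so no separate injectivity argument for $\tfrac12+T_0$ is needed.
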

\begin{proof}
    It is sufficient to prove that
    \be \label{eq2.11}
        ||T||=O(a), \quad a\to  0.
    \ee
    Estimate \eqref{eq2.8} implies that
    \be \label{eq2.12}
        |N_s \cdot J(t)|=O(|s-t|)||J||.
    \ee
    Therefore
    \be \label{eq2.13}
        \max_{s \in S} \int_S |\nabla_s g(s,t)||N_s \cdot J(t)|dt \le \max_{s \in S} \int_S O\left(\frac{1}{|s-t|}\right)dt||J|| =O(a)||J||.
    \ee
    Furthermore
    \be \label{eq2.14}
        \max_{s \in S} \int_S \left|\frac{\partial g(s,t)}{\partial N_s}\right|dt =O(a).
    \ee
    Estimates \eqref{eq2.13} and \eqref{eq2.14} imply \eqref{eq2.11}.

    Lemma \ref{lem2.2} is proved.
\end{proof}
From these lemmas the following theorem follows:
\begin{thm} \label{thm2.3}
    Equation \eqref{eq2.7} has a solution in $C(S)$, this solution is unique and satisfies condition \eqref{eq2.8}.
\end{thm}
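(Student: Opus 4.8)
The plan is to obtain Theorem~\ref{thm2.3} as an immediate consequence of the two preceding lemmas via the Fredholm alternative. Rewriting equation \eqref{eq2.7} in operator form as $\left(\frac{1}{2}I+T\right)J=-[N,E_0]$, where $I$ is the identity on $C(S)$, the whole statement reduces to showing that $\frac{1}{2}I+T$ is a bijection of $C(S)$ onto itself with bounded inverse.

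First I would invoke Lemma~\ref{lem2.1}: it guarantees that $T$ is compact on $C(S)$ and that every solution of \eqref{eq2.7} is tangential, i.e. satisfies \eqref{eq2.8}. Compactness of $T$ means that $\frac{1}{2}I+T$ is a compact perturbation of the boundedly invertible operator $\frac{1}{2}I$, hence a Fredholm operator of index zero, so the Riesz--Schauder alternative applies: the operator is surjective if and only if it is injective, and in that case its inverse is bounded.

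Next I would use Lemma~\ref{lem2.2} to verify injectivity. That lemma states that for $a$ sufficiently small the homogeneous equation $\frac{1}{2}J+TJ=0$ forces $J=0$, i.e. the kernel of $\frac{1}{2}I+T$ is trivial. By the alternative just cited, triviality of the kernel yields bijectivity and a bounded inverse, so \eqref{eq2.7} has a unique solution in $C(S)$ for every right-hand side, in particular for $-[N,E_0]$. The tangentiality \eqref{eq2.8} of this solution is already contained in Lemma~\ref{lem2.1}, so nothing further is required.

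The only point needing care --- and it is where the smallness hypothesis enters --- is injectivity, which is exactly what Lemma~\ref{lem2.2} supplies; the rest is routine functional analysis. In fact, since Lemma~\ref{lem2.2} furnishes the stronger quantitative bound $\|T\|=O(a)$ from \eqref{eq2.11}, one can bypass the Fredholm alternative altogether: for $a$ small enough $\|2T\|<1$, so $\frac{1}{2}I+T=\frac{1}{2}(I+2T)$ is invertible by a Neumann series, giving existence, uniqueness, and a bounded inverse directly, together with the convergent representation $J=2\sum_{n\ge 0}(-2T)^n(-[N,E_0])$. Thus there is no genuine obstacle here; the substantive work has already been carried out in the two lemmas.
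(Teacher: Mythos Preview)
Your proposal is correct and follows exactly the route the paper intends: the paper simply states that Theorem~\ref{thm2.3} ``follows from these lemmas,'' and you have spelled out the standard Fredholm-alternative (or, equivalently, Neumann-series) argument that makes this precise. There is nothing to add.
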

If $r:=|x|\to \infty, \frac{x}{r}=\beta$, and the origin is inside $D$, then formula \eqref{eq2.1} implies
\be \label{eq2.15}
    A(\beta, \alpha, k)=\frac{ik}{4\pi}[\beta, Q], \quad Q:=\int_S J(t)dt.
\ee
Let us derive a formula for $Q$. Integrate equation \eqref{eq2.7} over $S$ and keep the main terms as $a \to 0$. One has
\be \label{eq2.16}
    \frac{Q}{2}+\int_S ds\int_S \left(\nabla_s g(s,t)N_s\cdot J(t)-J(t)\frac{\partial g(s,t)}{\partial N_s}\right)dt= -\int_S [N,E_0]ds.
\ee
Clearly,
\be \label{eq2.17}
    -\int_S [N,E_0]ds=-\int_D \nabla \times E_0 dx=-(\nabla \times E_0)(x_1)|D|,
\ee
where $x_1 \in D$, and $|D|=c_D a^3,$ $c_D>0$ is a constant which depends on the geometry of $D$. Furthermore
\be \label{eq2.18}
    \int_S dt J(t) \int_S\left(-\frac{\partial g(s,t)}{\partial N_s}\right)ds=\frac{Q}{2}, \quad a \to 0,
\ee
where we have used the relation
\be \label{eq2.19}
    -\int_S\frac{\partial g(s,t)}{\partial N_s}ds=\frac{1}{2}, \quad a \to 0,
\ee
see, for example, \cite{R635}, p.8.

Finally,
\begin{align} \label{eq2.20}
    I:&=\int_S ds\int_S dt \nabla_s g(s,t)N_s\cdot J(t) \nonumber \\
      &=\int_S dt J(t)\cdot\left(\int_{|s-t|<\delta}N_s\cdot \nabla_s g(s,t) ds + \int_{|s-t|\ge \delta}N_s\cdot \nabla_s g(s,t) ds\right) \nonumber \\
      &:= \int_S dt J(t)(I_1+I_2).
\end{align}
One has $|J(t)|\cdot N(s) \le c|t-s|,$ and $ |\nabla_s g(s,t)| \le \frac{c}{|s-t|^2}$,
 where $c>0$ is an estimation constant. Thus,
\be \label{eq2.21}
    |I_1| \le c\int_{|s-t|\le \delta} \frac{ds}{|s-t|} \le ca \epsilon(\delta), \quad
    \lim_{\delta \to 0}\epsilon(\delta)= 0,
\ee
and
\be \label{eq2.22}
    |I_2| \le c\int_{|s-t|\ge \delta} \frac{ds}{\delta^2} \le C\frac{a^2}{\delta^2}.
\ee
Therefore, choosing, for example,  $\delta=(a)^{1/2}$ and $a$ sufficiently small so that $\epsilon (\delta)$
is sufficiently small, one gets
\be \label{eq2.23}
    |I|\le cQ\left(a+ \epsilon((a)^{1/2})a\right)=o(1)Q, \qquad a\to 0.
\ee
From \eqref{eq2.16}-\eqref{eq2.18} and \eqref{eq2.23} one obtains
\be \label{eq2.24}
    Q=-(\nabla\times E_0)(x_1)|D|, \qquad a \to 0,
\ee
where $x_1 \in D$ is an arbitrary point.

From \eqref{eq2.24} and \eqref{eq2.15} it follows that
\be \label{eq2.25}
    A(\beta,\alpha,k)=-\frac{ik}{4\pi}[\beta,(\nabla\times E_0)(x_1)]|D|, \quad |D|=c_D a^3, \quad a \to 0,
\ee
and
\be \label{eq2.26}
    (\nabla \times E_0)(x_1)=ik[\alpha, \mathcal{E}]e^{ik\alpha\cdot x_1}.
\ee
Since $ka \ll 1$ and $x_1 \le a$, one may write
\be \label{eq2.26a}
    (\nabla\times E_0)(x_1)=ik[\alpha, \mathcal{E}],
\ee
and
\be \label{eq2.27}
    A(\beta,\alpha,k)=\frac{k^2}{4\pi}[\beta,[\alpha, \mathcal{E}]]c_D a^3.
\ee
Let us summarize what we have proved.
\begin{thm} \label{thm2.4}
    If $ka \ll 1$, then formula \eqref{eq2.15}, \eqref{eq2.25}, and \eqref{eq2.27} hold.
\end{thm}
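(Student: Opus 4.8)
The plan is to assemble the three displayed identities \eqref{eq2.15}, \eqref{eq2.25}, and \eqref{eq2.27} by combining the solvability statement of Theorem \ref{thm2.3} with the far-field asymptotics of the curl single-layer representation \eqref{eq2.2} and a careful asymptotic reduction of the integral equation \eqref{eq2.7} as $a\to 0$. Since $J$ exists, is unique, and is tangential by Theorem \ref{thm2.3}, each formula is a consequence of extracting the leading term in $a$ (equivalently in $ka$) from the already-derived relations.

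First I would establish \eqref{eq2.15}. Placing the origin inside $D$ and expanding the kernel for $r=|x|\to\infty$,
\[
g(x,t)=\frac{e^{ikr}}{4\pi r}\,e^{-ik\beta\cdot t}\bigl(1+o(1)\bigr),\qquad \beta=\frac{x}{r},
\]
I apply $\nabla_x\times$ and keep only the leading $1/r$ contribution, which comes from differentiating the oscillatory factor $e^{ikr}$ rather than the amplitude, giving
\[
v_E=\frac{ik\,e^{ikr}}{4\pi r}\left[\beta,\int_S e^{-ik\beta\cdot t}J(t)\,dt\right]+o\left(\frac{1}{r}\right).
\]
Because $ka\ll 1$ and $|t|\le a$ on $S$, I replace $e^{-ik\beta\cdot t}$ by $1$ up to relative error $O(ka)$, so that $\int_S e^{-ik\beta\cdot t}J\,dt=Q(1+o(1))$ with $Q=\int_S J\,dt$; comparison with the definition \eqref{eq1.10} yields \eqref{eq2.15}.

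The substantive step is \eqref{eq2.25}, which requires the leading behavior of $Q$. I would integrate \eqref{eq2.7} over $S$, using the kernel form \eqref{eq2.9}, to obtain \eqref{eq2.16}. The right-hand side is handled by Stokes' theorem, $-\int_S[N,E_0]\,ds=-(\nabla\times E_0)(x_1)|D|$ with $|D|=c_D a^3$; the term carrying $\partial g/\partial N_s$ collapses to $Q/2$ via the classical relation $-\int_S \partial g/\partial N_s\,ds=\tfrac12$ as $a\to 0$. The remaining term $I$ in \eqref{eq2.20} must be shown negligible, and this is the \emph{main obstacle}: I would split the inner integral at $|s-t|=\delta$, use the tangency estimate $N_s\cdot J(t)=O(|s-t|)\|J\|$ from \eqref{eq2.8}--\eqref{eq2.12} together with $|\nabla_s g|=O(|s-t|^{-2})$, bounding the near part by $O(a\,\epsilon(\delta))$ and the far part by $O(a^2/\delta^2)$, then choose $\delta=a^{1/2}$ so that both are $o(1)$ relative to $Q$. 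This gives $Q=-(\nabla\times E_0)(x_1)|D|(1+o(1))$, i.e. \eqref{eq2.24}, and inserting it into \eqref{eq2.15} produces \eqref{eq2.25}.

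Finally, \eqref{eq2.27} is a direct computation. Differentiating the plane wave \eqref{eq1.8} gives $(\nabla\times E_0)(x_1)=ik[\alpha,\mathcal{E}]e^{ik\alpha\cdot x_1}$, and since $ka\ll 1$ with $|x_1|\le a$ the phase is $1+O(ka)$, so \eqref{eq2.26a} holds; substituting into \eqref{eq2.25} and using $-\frac{ik}{4\pi}[\beta,ik[\alpha,\mathcal{E}]]=\frac{k^2}{4\pi}[\beta,[\alpha,\mathcal{E}]]$ yields \eqref{eq2.27}. Everything besides the control of the singular integral $I$ is an application of Theorem \ref{thm2.3}, the far-field expansion, and elementary asymptotics in $a$.
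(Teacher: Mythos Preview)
Your proposal is correct and follows essentially the same route as the paper: the paper's proof of Theorem \ref{thm2.4} is precisely the derivation in \eqref{eq2.15}--\eqref{eq2.27} preceding the theorem statement, and you have recapitulated each step---the far-field expansion yielding \eqref{eq2.15}, the integration of \eqref{eq2.7} over $S$ with the Stokes identity \eqref{eq2.17}, the double-layer relation \eqref{eq2.19}, the splitting of $I$ at $|s-t|=\delta=a^{1/2}$ to obtain \eqref{eq2.23}, and the plane-wave computation \eqref{eq2.26}--\eqref{eq2.27}---in the same order and with the same estimates.
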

Theorem \ref{thm2.3} and \ref{thm2.4} are our basic results for EM wave scattering by one small perfectly conducting body of an arbitrary shape.

\section{Many-body wave scattering}  \label{sec3}
Consider now the case of many-body EM wave scattering.

Let $D_m$ be small perfectly conducting body, $x_m \in D_m$ be an arbitrary point, $D:=\bigcup_{m=1}^M D_m, D':= \RRR \setminus D, D \subset \Omega \subset \RRR$.

Assume that formula \eqref{eq1.1} gives the distribution of small bodies in $\Omega$. The total number $M=M(a)$ of small particles, distributed in $\Omega$ is given by formula \eqref{eq1.2}.

The distance $d$ between closest neighboring particles is assumed to be large compared with $a$
\be \label{eq3.1}
    d \gg a
\ee
Let $L$ be a side of a cube $\Omega$ where small particles are distributed. Then $\left(\frac{L}{d}\right)^3=O(M)=O\left(\frac{1}{a^3}\right)$. Thus
\be \label{eq3.2}
    d=O(La).
\ee
Therefore, condition \eqref{eq3.1} holds if $L$ is large. If $L$ is fixed, then \eqref{eq3.1} holds if $N(x) \ll 1$, $d=O\left(\left(\int_\Omega N(x)dx\right)^{-1/3}a\right)$.

The many-body scattering problem consists of solving equation \eqref{eq1.4}-\eqref{eq1.7}, where now $D=\bigcup_{m=1}^M D_m$.

Let us look for the solution of the form
\begin{align}
    E(x)&=E_0(x)+\sum_{m=1}^M \nabla\times\int_{S_m}g(x,t)J_m(t)dt \nonumber\\
        &=E_0(x)+\sum_{m=1}^M [\nabla g(x,x_m),Q_m]+ \sum_{m=1}^M \nabla\times\int_{S_m}(g(x,t)-g(x,x_m))J_m(t)dt, \label{eq3.3}
\end{align}
where
\be \label{eq3.4}
    Q_m:=\int_{S_m} J_m(t)dt.
\ee
\begin{lem} \label{lem3.1}
    If
    \be \label{eq3.5}
        ka+\frac{a}{d} \ll 1,
    \ee
    then
    \be \label{eq3.6}
        J_m:=\left|\nabla\times\int_{S_m}(g(x,t)-g(x,x_m))J(t)dt\right|\ll \left|[\nabla g(x,x_m),Q_m]\right|:=I_m.
    \ee
\end{lem}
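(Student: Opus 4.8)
The plan is to reduce \eqref{eq3.6} to a pointwise comparison of two integrands, showing that replacing $g(x,t)$ by $g(x,x_m)$ inside the single-layer integral over $S_m$ costs only a relative factor of order $ka+\frac ad$. First I would move the curl under both integral signs. Since $Q_m$ and $g(x,x_m)$ do not depend on $t$, the right-hand side equals $I_m=|[\nabla g(x,x_m),Q_m]|=\big|\int_{S_m}\nabla_x g(x,x_m)\times J(t)\,dt\big|$, while the left-hand side equals $\big|\int_{S_m}\nabla_x\big(g(x,t)-g(x,x_m)\big)\times J(t)\,dt\big|$. Thus both quantities are linear in the same current $J$, and the whole matter becomes a comparison of the kernels $\nabla_x\big(g(x,t)-g(x,x_m)\big)$ and $\nabla_x g(x,x_m)$ for $t\in S_m$. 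The two geometric facts I would use throughout are that $|t-x_m|\le a$ for $t\in S_m$, and that the observation point $x$ sits near a neighbouring body, so $|x-x_m|\ge d\gg a$.

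The key estimate is that the difference kernel is smaller than the main kernel by the factor $ka+\frac ad$. I would obtain this from Taylor's formula in the variable $t$ about $x_m$: since $g$ depends on $x-t$ only, $\nabla_x g(x,t)-\nabla_x g(x,x_m)=\nabla_t\nabla_x g(x,x_m)\cdot(t-x_m)+O(a^2|\partial^3 g|)$, so $|\nabla_x(g(x,t)-g(x,x_m))|\le a\,\max_{y}|\nabla_t\nabla_x g(x,y)|$, the maximum over the segment from $x_m$ to $t$. Using the elementary bounds $|\nabla_x g(x,y)|\le C(|x-y|^{-2}+k|x-y|^{-1})$ and $|\nabla_t\nabla_x g(x,y)|\le C(|x-y|^{-3}+k|x-y|^{-2}+k^2|x-y|^{-1})$, valid for $g=\frac{e^{ik|x-y|}}{4\pi|x-y|}$, together with $|x-y|\ge d$, I get
\be
\frac{|\nabla_x(g(x,t)-g(x,x_m))|}{|\nabla_x g(x,x_m)|}\le C\,a\,\frac{d^{-3}+kd^{-2}+k^2d^{-1}}{d^{-2}+kd^{-1}}\le C\Big(\frac ad+ka\Big).
\ee
This step, namely that the relative smallness of $g(x,t)-g(x,x_m)$ survives one differentiation in $x$ (hence survives the curl), is the technical heart of the lemma; the second-derivative bound on $g$ is exactly what makes it work.

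With the kernel comparison in hand I would close by integrating against $|J|$:
\be
\Big|\nabla\times\int_{S_m}(g(x,t)-g(x,x_m))J(t)\,dt\Big|\le C\Big(\frac ad+ka\Big)|\nabla_x g(x,x_m)|\int_{S_m}|J(t)|\,dt,
\ee
and comparing the right-hand side with $I_m=|\nabla_x g(x,x_m)\times Q_m|$. Since $ka+\frac ad\ll1$ by \eqref{eq3.5}, the prefactor is $o(1)$ and \eqref{eq3.6} would follow.

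The main obstacle, and the point I would have to argue with genuine care, is the lower bound for $I_m$. The estimate above controls the error term by $\int_{S_m}|J|\,dt$, whereas $I_m$ is governed by the \emph{vector} integral $Q_m=\int_{S_m}J\,dt$ together with the angular factor $|\sin\angle(\nabla g(x,x_m),Q_m)|$. Because $Q_m$ is formed with cancellation (cf. \eqref{eq2.24}, where $Q_m=-(\nabla\times E_0)(x_m)|D_m|$ is an order of magnitude below $\int_{S_m}|J|\,dt$), the crude bound by $\int_{S_m}|J|\,dt$ is too lossy, and one cannot simply conclude $I_m\asymp|\nabla_x g(x,x_m)|\int_{S_m}|J|\,dt$. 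The honest version of the argument must therefore compare the dipole term and the error term at the level of the corresponding moments of $J$ rather than at the level of $\int_{S_m}|J|\,dt$, so as to exhibit a matching lower bound for $I_m$; this is where the explicit leading form of $J$ and the geometry of $D_m$ genuinely enter, and it is the step I expect to be hardest to make fully rigorous.
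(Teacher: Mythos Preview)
Your approach is essentially identical to the paper's: it too differentiates under the integral, bounds $|\nabla_x g(x,x_m)|\le c\bigl(\frac{k}{d}+\frac{1}{d^2}\bigr)$ and $|\nabla_x(g(x,t)-g(x,x_m))|\le ca\bigl(\frac{k^2}{d}+\frac{k}{d^2}+\frac{1}{d^3}\bigr)$ via the mean-value/Taylor step, and then divides to obtain the ratio $\le c\bigl(ka+\frac{a}{d}\bigr)$.

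The obstacle you flag at the end --- that the upper bound on the remainder is naturally in terms of $\int_{S_m}|J|\,dt$, whereas $I_m$ involves the vector sum $Q_m$ and the sine of the angle with $\nabla g$ --- is genuine, and the paper does \emph{not} resolve it. The paper simply writes both estimates with a common factor ``$Q$'' (its equation for $J_m$ reads $J_m\le cQ\bigl(\frac{ak^2}{d}+\frac{ka}{d^2}+\frac{a}{d^3}\bigr)$ and for $I_m$ reads $I_m=Q\,O\bigl(\frac{k}{d}+\frac{1}{d^2}\bigr)$) without distinguishing $|Q_m|$ from $\int_{S_m}|J|\,dt$ or discussing the angular factor in the cross product. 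So you have reproduced the paper's argument and, in addition, correctly identified a point the paper glosses over; the lemma as stated and proved in the paper is really an order-of-magnitude heuristic rather than a sharp inequality, and your concern about the lower bound for $I_m$ is exactly where the rigor would have to be supplied.
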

\begin{proof}
    One has
    \be \label{eq3.7}
        \left|\nabla g(x,x_m)\right| \le c\left(\frac{k}{d}+\frac{1}{d^2}\right),
    \ee
    where $c>0$ is a constant and $d=|x-x_m|$.

    Similarly,
    \be \label{eq3.8}
        \left|\nabla\times(g(x,t)-g(x,x_m))\right| \le ca\left(\frac{k^2}{d}+\frac{k}{d^2}+\frac{1}{d^3}\right),
    \ee
    where $|t-x_m|\le a, |x-x_m|=d$.
    Therefore,
    \be \label{eq3.9}
        J_m \le cQ\left(\frac{ak^2}{d}+\frac{ka}{d^2}+\frac{a}{d^3}\right), \quad I_m=QO\left(\frac{k}{d}+\frac{1}{d^2}\right).
    \ee
    Taking into account that $kd \ll 1$, one gets
    \be \label{eq3.10}
        \frac{J_m}{I_m} \le c\frac{ak^2d^2+kad+a}{d(1+kd)} \le c\left(ka+\frac{a}{d}\right) \ll 1.
    \ee
    Lemma \ref{lem3.1} is proved.
\end{proof}
From lemma \ref{lem3.1} it follows that {\em one can solve the many-body EM wave scattering problem by finding quantities $Q_m, 1 \le m \le M$, rather than the unknown functions $J_m(t)$. This allows one to solve numerically the scattering problem with very large $M$, provided that assumption \eqref{eq3.5} holds.}

 The solution is given by the formula
\be \label{eq3.11}
    E(x) \sim E_0(x)+\sum_{m=1}^M [\nabla_x g(x,x_m),Q_m].
\ee
Let us introduce the notion of the effective field: it is the field acting on a particular small particle $D_j$ from all other particles and from the incident field:
\be \label{eq3.12}
    E_e(x):=E_0(x)+\sum_{m \neq j} \nabla\times\int_{S_m}g(x,t)J_m(t)dt.
\ee
In the limit $a\to 0$ the effective field differs negligibly from the full field already at the distances of the order of $a$ because the radiation from a small particle is $O(a^3)$, as we proved in section \ref{sec2}.

If condition \eqref{eq3.5} holds, then the effective field, scattered by the j-th particle, can be calculated by the formula analogous to formula \eqref{eq2.24}:
\be \label{eq3.13}
    Q_j=- (\nabla\times E_e)(x_j)c_D a^3, \quad 1 \le j \le M.
\ee
Let us assume  for simplicity, that $c_D:=c_0$ does not depend on $j$. This, for example, happens if all the particles are of the same geometry. Then $Q_j$ is known if the quantities $A_j:=(\nabla\times E_e)(x_j)$ are known, $1 \le j \le M$.

Let us derive a linear algebraic system (LAS) for finding these quantities. This will give a numerical method for solving many-body EM wave scattering problem. Take curl of equation \eqref{eq3.12} and then put $x=x_j$ in the resulting equation. This yields
\be \label{eq3.14}
    A_j=A_{0j}- \sum_{m \neq j} \nabla_x\times[\nabla_x g(x,x_m),A_m(t)]|_{x=x_j}c_0 a^3,  \quad 1 \le j \le M,
\ee
where
\be \label{eq3.15}
    A_{0j}:=(\nabla\times E_0)(x_j).
\ee
If $A_j$ are found from \eqref{eq3.14}, then the solution to the scattering problem is given by formula \eqref{eq3.11}:
\be \label{eq3.16}
    E(x)=E_0(x)-\sum_{m=1}^M [\nabla_x g(x,x_m), A_m]c_0 a^3.
\ee
If $M$ is very large, then the order of LAS \eqref{eq3.14} is very large. Let us reduce this order drastically.

Consider a partition of the domain $\Omega$ into a union of cubes $\Delta_p, 1 \le p \le P, P \ll M$. Let $x_p$ be the center of $\Delta_p$ and $b$ be the side of $\Delta_p$. Assume that $b=b(a)$,
\be \label{eq3.17}
    b \gg d \gg a, \quad \lim_{a\to 0} b(a)=0,
\ee
where $d=d(a)$ is the distance between closest neighboring particles. Then there will be many particles in each of the cubes $\Delta_p$. Let us transform formula \eqref{eq3.16} as follows
\begin{align}
    E_q&:=E(x_q)=E_0(x_q)-\sum_{p=1}^P [\nabla_x g(x,x_p), A_p]|_{x=x_q} c_0a^3 \sum_{x_m \in \Delta_p}1 \nonumber \\
       &:= E_{0q}-c_0\nabla_x\sum_{p=1}^P g(x,x_p)A_p N(x_p)|\Delta_p|, \label{eq3.18}
\end{align}
where by formula \eqref{eq1.1} one has
\be \label{eq3.19}
    a^3 \sum_{x_m \in \Delta_p}1=N(x_p)|\Delta_p|(1+o(1)).
\ee
Here $|\Delta_p|$ is he volume of  the cube $\Delta_p$ and we have used the relations
\be \label{eq3.20}
    \nabla_x g(x,x_m) \sim \nabla_x g(x,x_p), \quad A_m \sim A_p, \quad \forall m: x_m \in \Delta_p.
\ee
These relations hold because the side of $\Delta_p$ tends to zero as $a\to 0$, and the functions $\nabla_x g(x,y), A(y)$ are continuous functions of $y \in \Delta_p$ when $x \not\in \Delta_p$. A similar transformation of LAS \eqref{eq3.14} yields
\be \label{eq3.21}
    A_q = A_{0q}-c_0\sum_{p\neq q}^P \nabla_x\times[\nabla_x g(x,x_p), A_p]|_{x=x_q}, \quad 1 \le q \le P.
\ee
The order of LAS \eqref{eq3.21} is $P \ll M$.

Equation \eqref{eq3.18} is a Riemannian sum for equation
\be \label{eq3.22}
    E(x) = E_0(x)-c_0\nabla\times\int_\Omega g(x,y)\nabla\times E(y) N(y) dy.
\ee
This equation describes the limiting field in the domain $\Omega$ when the number of particles tends to infinity while the size of the particles tends to zero, the particles are distributed by the law \eqref{eq1.1} and condition \eqref{eq3.5} holds.

Let us summarize the result.
\begin{thm} \label{thm3.2}
    Assume that condition \eqref{eq3.5} and equation \eqref{eq1.1} hold. The the many-body EM wave scattering problem \eqref{eq1.4}--\eqref{eq1.7} with $D=\bigcup_{m=1}^M D_m$ has a unique solution that can be computed by formula \eqref{eq3.11} with $Q_m$ given by formula \eqref{eq3.13} and the quantities $(\nabla\times E_e)(x_m):=A_m$ can be computed by solving LAS \eqref{eq3.14} or \eqref{eq3.21}. The effective field in $\Omega$ has a limit as $a \to 0$ and this limit solves equation \eqref{eq3.22}.
    Equation \eqref{eq3.22} has a solution in $C^2(\Omega)$ and this solution is unique.
\end{thm}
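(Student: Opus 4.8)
The plan is to prove the four assertions of Theorem~\ref{thm3.2} in a logically economical order: well-posedness of the many-body boundary value problem; reduction of its solution to the dipole data $Q_m$ and the systems \eqref{eq3.14}/\eqref{eq3.21}; unique solvability of the limiting equation \eqref{eq3.22}; and finally convergence of the discrete scheme to that limit.

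Uniqueness for \eqref{eq1.4}--\eqref{eq1.7} is the classical uniqueness theorem for the exterior Maxwell problem: a field satisfying \eqref{eq1.4}--\eqref{eq1.5} in $D'$, the perfectly conducting condition \eqref{eq1.7}, and the radiation condition \eqref{eq1.6} vanishes identically. Existence is obtained by seeking $E$ in the form \eqref{eq3.3}: Theorem~\ref{thm2.3} makes each single-body equation uniquely solvable, and Lemma~\ref{lem3.1} shows that under \eqref{eq3.5} the non-dipole corrections are negligible, so that the field is governed by the moments $Q_m$ through \eqref{eq3.11}. Applying the single-body relation \eqref{eq2.24} to the effective field acting on $D_j$ gives \eqref{eq3.13}; taking the curl of the effective-field representation \eqref{eq3.12} and evaluating at $x_j$ produces the system \eqref{eq3.14}, and grouping the particles into the cubes $\Delta_p$ via \eqref{eq3.19}--\eqref{eq3.20} yields the reduced system \eqref{eq3.21}. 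Both \eqref{eq3.18} and \eqref{eq3.21} are Riemann sums for the integral equation \eqref{eq3.22}.

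The analytic core is the unique solvability of \eqref{eq3.22} in $C^2(\Omega)$, which I would treat before the limit passage. Writing the operator as $E\mapsto c_0\int_\Omega[\nabla_x g(x,y),\,N(y)\nabla\times E(y)]\,dy$, its kernel satisfies $\nabla_x g=O(|x-y|^{-2})$ and is therefore weakly singular and integrable in $\RRR$; the associated volume operator is compact in the relevant H\"older space, so the Fredholm alternative applies and it suffices to prove uniqueness. For uniqueness I would take a solution of the homogeneous equation (set $E_0=0$), extend it to all of $\RRR$ by the same integral formula so that the resulting field radiates, and convert \eqref{eq3.22} into the differential form $(1+c_0N)\nabla^2E-c_0\nabla N\times(\nabla\times E)+k^2E=0$ in $\Omega$, with $\nabla^2E+k^2E=0$ and the radiation condition outside $\Omega$; since $1+c_0N>0$ this is an elliptic system with smooth lower-order coefficients, the radiation condition together with Rellich's lemma forces $E=0$ outside $\Omega$, and unique continuation from the vanishing Cauchy data on $\partial\Omega$ then forces $E\equiv0$ inside. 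The $C^2$ regularity follows by bootstrapping, the smooth incident field $E_0$ being corrected by a weakly singular volume potential that gains derivatives on its density.

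With \eqref{eq3.22} well posed, the final assertion follows from discretization theory: \eqref{eq3.21} is the collocation (Nystr\"om-type) approximation of a Fredholm equation of index zero with trivial null space, so for all sufficiently small $a$ it is uniquely solvable, its solutions $A_m$ are uniformly bounded, and $A_m\to(\nabla\times E)(x_m)$ as $a\to0$; this simultaneously establishes the computability claims and the existence of the limiting effective field, which solves \eqref{eq3.22}. I expect the principal obstacles to be precisely the uniqueness step for \eqref{eq3.22}, where one must exclude the resonances that would otherwise break the Fredholm argument, and the uniform convergence of the Riemann sums, where the diagonal cube must be controlled using the weak singularity $|x-y|^{-2}$ exactly as in the single-body estimate \eqref{eq2.20}--\eqref{eq2.23}.
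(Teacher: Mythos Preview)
Your overall strategy matches the paper's: the discussion preceding the theorem already derives \eqref{eq3.11}, \eqref{eq3.13}, \eqref{eq3.14}, \eqref{eq3.21}, and the Riemann-sum passage to \eqref{eq3.22}, so the paper's written proof treats only the last clause---unique solvability of \eqref{eq3.22}---and does so, as you do, by converting to a differential equation, invoking Rellich outside $\Omega$, and finishing with unique continuation.

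The genuine difference is in how the differential form is obtained. The paper applies $\nabla\times\nabla\times$ to the homogeneous \eqref{eq3.22}, uses $\nabla\cdot E=0$ (immediate from the curl representation), and writes
\[
\nabla\times\nabla\times E \;=\; k^2E - c_0 N(x)\,\nabla\times\nabla\times E,
\]
i.e.\ it \emph{drops} the cross term $c_0[\nabla N,\nabla\times E]$ that you (and the paper's own Section~\ref{sec4}, equation~\eqref{eq4.3}) retain. With that simplification each Cartesian component decouples into a scalar Schr\"odinger equation
\[
(\nabla^2+k^2-q)u=0,\qquad q(x)=\frac{k^2c_0N(x)}{1+c_0N(x)}\ge 0,
\]
for which the Rellich/unique-continuation argument is textbook and componentwise. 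Your route keeps the full term and works with the coupled elliptic system $(1+c_0N)\nabla^2E - c_0\,\nabla N\times(\nabla\times E)+k^2E=0$; this is more faithful to \eqref{eq3.22} when $N$ is nonconstant, but the unique continuation step then requires the vector (strong unique continuation for systems with lower-order coupling) rather than the scalar result. Both arguments are valid; the paper's buys simplicity at the price of tacitly assuming the gradient term is absent, while yours is cleaner logically but heavier technically.

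Two minor points worth tightening in your write-up: first, \eqref{eq3.22} is not literally a second-kind equation in $E$ because the integrand contains $\nabla\times E$, so your compactness/Fredholm claim needs the space to carry one derivative (e.g.\ work in $C^{1,\alpha}$ or take the curl first and set up the Fredholm alternative for $A=\nabla\times E$); second, the divergence-free condition $\nabla\cdot E=0$, which you need to pass from $\nabla\times\nabla\times$ to $-\nabla^2$, should be stated explicitly---it follows at once from the curl form of the integral term.
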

\begin{proof}
    Only the last statement of theorem \ref{thm3.2} is not yet proved.

    To prove it, assume that $E_0=0$, apply the operator $\nabla\times\nabla\times$ to the homogeneous version of equation \eqref{eq3.22} and use the formulas $$\nabla\times\nabla\times\nabla\times=(\nabla\cdot\nabla-\nabla^2)\nabla\times,\quad \nabla\cdot\nabla\times=0,$$
    and
    $$\nabla\cdot E=0, \quad -\nabla^2 g(x,y)=k^2 g(x,y)+\delta(x-y).$$ We obtain the following equation:
    $$\nabla\times\nabla\times E=k^2 E-c_0 N(x)\nabla\times\nabla\times E,$$ or
    $$\nabla\times\nabla\times E=\frac{k^2}{1+c_0 N(x)}E.$$ This equation and the relation $\nabla\cdot E=0$ imply
    \be \label{eq3.23}
        -\nabla^2 E-k^2 E+\frac{k^2 c_0 N(x)}{1+c_0 N(x)}E=0.
    \ee
    The field $E$ satisfies the radiation condition and equation \eqref{eq3.23} is a Schr\"{o}dinger equation with non-negative compactly supported potential
    $$q(x):=\frac{k^2 c_0 N(x)}{1+c_0 N(x)}.$$ Therefore, the solution $E(x)$ to equation \eqref{eq3.23} is zero.

     Let us prove this. Let $u$ be any Cartesian coordinate of $E(x)$. Then
    \begin{align}
        &[\nabla^2+k^2-q(x)]u=0 \quad \text{in } \RRR \label{eq3.24}\\
        &u_r-iku=o\left(\frac{1}{r}\right), \quad r \to \infty. \label{eq3.25}
    \end{align}
    Multiply equation \eqref{eq3.24} by $\overline{u}$, subtract complex conjugate equation \eqref{eq3.24} multiplied by $u$, and integrate over $B_R:=\{x: |x|  \le R\}$. The result is
    \be \label{eq3.26}
        0=\int_{B_R}(\overline{u}\nabla u-u\nabla\overline{u})dx=\int_{|x|=R}(\overline{u}u_r-u\overline{u}_r)ds= 2ik\int_{|x|=R} |u|^2ds+o(1).
    \ee
    Thus
    \be \label{eq3.27}
        \lim_{R\to \infty}\int_{|x|=R} |u|^2ds=0.
    \ee
    Since $q(x)=0$ outside $\Omega$, equation \eqref{eq3.24} yields
    \be \label{eq3.28}
        (\nabla^2+k^2)u=0, \quad |x|> R_0,
    \ee
    where $R_0$ is any number such that $B_{R_0} \supset \Omega$.

    By lemma 1 from \cite{R190}, p. 25, it follows from \eqref{eq3.28} and \eqref{eq3.27} that $u=0$ in $\RRR \setminus B_{R_0}$. By the unique continuation property for solutions to elliptic equation \eqref{eq3.24}, it follows that $u=0$ in $\RRR$. Thus, $E(x)=0$ in $\RRR$.

    This proves the last statement of Theorem \ref{thm3.2}.
\end{proof}

\section{Applications to materials science} \label{sec4}
Let us show that the new medium in $\Omega$ has a new refraction coefficient.

Apply the operator $\nabla\times\nabla\times$ to equation \eqref{eq3.22}, take it into account that
\be \label{eq4.1}
    \nabla\times\nabla\times E_0=k^2 E_0, \qquad \nabla\times\nabla\times\nabla\times=(\nabla\cdot\nabla-\nabla^2)\nabla\times,
\ee
and  $-\nabla^2 g=k^2 g+\delta(x-y)$. Thus,
\be \label{eq4.2}
    -\nabla^2\nabla\times \int_\Omega g(x,y) \nabla\times E(y) N(y)dy=\nabla\times \int_\Omega (k^2 g+\delta(x-y))\nabla\times E(y)N(y)dy.
\ee
Consequently, the equation resulting from \eqref{eq3.22} is:
\be \label{eq4.3}
    \nabla\times\nabla\times E =k^2 E-c_0 \nabla\times( \nabla\times E(x)N(x))=k^2 E-c_0 N(x) \nabla\times \nabla\times E-c_0[ \nabla N(x), \nabla\times E].
\ee
This implies
\be \label{eq4.4}
    \nabla\times\nabla\times E = \frac{k^2}{1+c_0 N(x)}-\frac{c_0}{1+c_0 N(x)}[\nabla N(x), \nabla\times E].
\ee
Let us interpret this equation physically. First, the new refraction coefficient is
\be \label{eq4.5}
    n^2(x)=\frac{k^2}{1+c_0 N(x)}.
\ee
By the refraction coefficient $n^2(x)$ one means the coefficient in the equation
\be \label{eq4.6}
    \nabla\times\nabla\times E = k^2n^2(x)E.
\ee
Secondly, to interpret the last term in \eqref{eq4.4}, consider equation \eqref{eq1.4} and assume that $\mu=\mu(x)$. Then
\be \label{eq4.7}
    \nabla\times\nabla\times E=i\omega\mu\nabla\times H+i\omega[\nabla\mu, H],
\ee
where $H=\frac{\nabla\times E}{i\omega\mu}$ and $ \nabla\times H=-i\omega\epsilon E$.

Thus,
\be \label{eq4.8}
    \nabla\times\nabla\times E=K^2(x)E+\left[\frac{\nabla\mu}{\mu},\nabla\times E\right],
\ee
where
\be \label{eq4.9}
    K^2(x)E=\omega^2 \epsilon \mu(x)=\frac{k^2}{1+c_0 N(x)}.
\ee
Consequently,
\be \label{eq4.10}
    \mu(x)=\frac{\mu}{1+c_0 N(x)},
\ee
and
\be \label{eq4.11}
    \frac{\nabla\mu}{\mu}=-\frac{c_0 \nabla N(x)}{1+c_0 N(x)}.
\ee
Therefore, equation \eqref{eq4.4} is exactly equation \eqref{eq4.8} with $\mu(x)$ defined in \eqref{eq4.10}.

We have proved the following theorem.
\begin{thm} \label{thm4.1}
    The new medium in $\Omega$ corresponds to the material with the refraction coefficient \eqref{eq4.5} and permeability \eqref{eq4.10}.
\end{thm}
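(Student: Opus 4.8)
The plan is to show that equation \eqref{eq4.4} is formally identical to the vector wave equation \eqref{eq4.8} for an inhomogeneous medium with variable permeability, thereby identifying both the new refraction coefficient and the new permeability. The starting point is equation \eqref{eq3.22}, the limiting equation for the effective field. First I would apply the operator $\nabla\times\nabla\times$ to both sides of \eqref{eq3.22}. On the left this produces $\nabla\times\nabla\times E_0 = k^2 E_0$, using that $E_0$ solves the free Maxwell system. On the right I must handle the term $\nabla\times\nabla\times\nabla\times\int_\Omega g(x,y)\,\nabla\times E(y)\,N(y)\,dy$, and this is where the two vector-calculus identities in \eqref{eq4.1} enter: writing $\nabla\times\nabla\times\nabla\times=(\nabla\cdot\nabla-\nabla^2)\nabla\times$ and using $\nabla\cdot\nabla\times=0$ reduces the triple curl to $-\nabla^2\nabla\times$ acting on the potential, which is then evaluated by the Green's-function identity $-\nabla^2 g = k^2 g+\delta(x-y)$.

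The key computation is the substitution $-\nabla^2 g = k^2 g + \delta(x-y)$ inside the integral, as in \eqref{eq4.2}. The delta function collapses the integral and contributes the local term $c_0\,\nabla\times\!\big(N(x)\,\nabla\times E(x)\big)$, while the $k^2 g$ piece reproduces a multiple of the original integral, which I can re-express via \eqref{eq3.22} itself in terms of $E_0$ and $E$. Collecting terms yields \eqref{eq4.3}. The step I expect to require the most care is the product-rule expansion $\nabla\times\!\big(N\,\nabla\times E\big) = N\,\nabla\times\nabla\times E + [\nabla N,\nabla\times E]$, because the extra commutator term $[\nabla N,\nabla\times E]$ is precisely what distinguishes a pure refraction-coefficient change from a genuine permeability variation; keeping track of it correctly is essential. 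After isolating $\nabla\times\nabla\times E$ and dividing by the scalar factor $1+c_0 N(x)$, I arrive at \eqref{eq4.4}.

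The final and interpretive part of the proof matches \eqref{eq4.4} against the Maxwell system with spatially varying $\mu$. For this I would start from \eqref{eq1.4} with $\mu=\mu(x)$ and take the curl, producing $\nabla\times\nabla\times E = i\omega\mu\,\nabla\times H + i\omega[\nabla\mu,H]$ as in \eqref{eq4.7}. Substituting $H=\frac{\nabla\times E}{i\omega\mu}$ and $\nabla\times H=-i\omega\epsilon E$ converts this into \eqref{eq4.8}, where the scalar coefficient is $K^2(x)=\omega^2\epsilon\mu(x)$ and the commutator term carries the factor $\frac{\nabla\mu}{\mu}$. Comparing the scalar coefficients in \eqref{eq4.4} and \eqref{eq4.8} forces \eqref{eq4.9}, which I solve for $\mu(x)$ to obtain \eqref{eq4.10}; differentiating then gives \eqref{eq4.11}, and a direct check shows the commutator terms agree. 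The reading $n^2(x)=\frac{k^2}{1+c_0 N(x)}$ in \eqref{eq4.5} follows from the definition \eqref{eq4.6} of the refraction coefficient. This establishes both claims of Theorem \ref{thm4.1}. The main obstacle is purely bookkeeping: ensuring the delta-function term and the gradient-of-$N$ commutator are assembled with the correct signs so that the structural match with \eqref{eq4.8} is exact rather than merely qualitative.
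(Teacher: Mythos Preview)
Your proposal is correct and follows essentially the same route as the paper: apply $\nabla\times\nabla\times$ to \eqref{eq3.22}, use the identities in \eqref{eq4.1} together with $-\nabla^2 g=k^2 g+\delta$ to obtain \eqref{eq4.3}, divide through by $1+c_0N(x)$ to reach \eqref{eq4.4}, and then match this against the Maxwell system with variable $\mu$ to read off \eqref{eq4.5} and \eqref{eq4.10}. The paper's argument is exactly this, with the same product-rule expansion and the same re-insertion of \eqref{eq3.22} to convert the $k^2 g$ integral term into $k^2 E$.
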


\section{Conclusions} \label{sec5}
Basic results of this paper are formulated in Theorems \ref{thm2.3}, \ref{thm2.4}, \ref{thm3.2} and \ref{thm4.1}.


\begin{thebibliography}{99}

\bibitem{BH} C. Bohren, D. Huffman, Absorption and scattering of light by small particles, Wiley, New York, 1998.

\bibitem{Mie} G. Mie, Beitr\"{a}ge zur Optik Tr\"{u}ben Medien, spezielle kolloidaler Metall\"{o}sungen, Ann d. Phys., 25, (1908), 377-445.

\bibitem{R144} A. G. Ramm, {\em Iterative methods for calculating static fields and wave scattering by small bodies}, Springer Verlag, New York, 1982.

\bibitem{R190} A. G. Ramm, {\em Scattering by obstacles}, D.Reidel, Dordrecht, 1986, pp.1-442.

\bibitem{R476} A. G. Ramm, {\em Wave scattering by small bodies of arbitrary shapes}, World Sci. Publishers, Singapore, 2005.

\bibitem{R635} Ramm, A.G., 2013 {\em Scattering of Acoustic and Electromagnetic Waves by Small
Bodies of Arbitrary Shapes. Applications to Creating New Engineered Materials,} Momentum Press, New York.

\bibitem{Ra} Lord Rayleigh (J.W.Strutt), Scientific papers, Cambridge, 1922.

\end{thebibliography}
\end{document}